\documentclass{article}

\usepackage{verbatim}
\usepackage{graphics}
\usepackage[pdftex]{graphicx}
\usepackage{amsmath, amssymb, amsfonts, amsthm}
\usepackage{url}
\usepackage{setspace}
\usepackage{algorithmic}
\newtheorem{theorem}{Theorem}

\newtheorem{lemma}[theorem]{Lemma}

\newtheorem{open}{Open Problem}
\begin{document}

\author{Ben D. Lund\footnote{lund.ben@gmail.com. Department of Computer Science, University of Cincinnati, Cincinnati, OH 45221, USA.}
\and George B. Purdy\footnote{george.purdy@uc.edu. Department of Computer Science, University of Cincinnati, Cincinnati, OH 45221, USA.}
\and Justin W. Smith\footnote{smith5jw@mail.uc.edu. Department of Computer Science, University of Cincinnati, Cincinnati, OH 45221, USA.}
\and Csaba D. T\'oth\footnote{cdtoth@ucalgary.ca. Department of Mathematics and Statistics, University of Calgary, Calgary, AB, Canada.}}
\title{Collinearities in Kinetic Point Sets}
\maketitle

\begin{abstract}
Let $P$ be a set of $n$ points in the plane, each point moving along a given trajectory. A {\em $k$-collinearity} is a pair $(L,t)$ of a line $L$ and a time $t$ such that $L$ contains at least $k$ points at time $t$, the points along $L$ do not all coincide, and not all of them are collinear at all times. We show that, if the points move with constant velocity, then the number of 3-collinearities is at most $2\binom{n}{3}$, and this bound is tight. There are $n$ points having $\Omega(n^3/k^4 + n^2/k^2)$ distinct $k$-collinearities. Thus, the number of $k$-collinearities among $n$ points, for constant $k$, is $O(n^3)$, and this bound is asymptotically tight. In addition, there are $n$ points, moving in pairwise distinct directions with different speeds, such that no three points are ever collinear.
\end{abstract}

\section{Introduction}

Geometric computation of moving objects is often supported by kinetic data structures (KDS), introduced by Basch, Guibas and Hershberger~\cite{Basch99,guibas04}. The combinatorial structure of a configuration is described by a set of certificates, each of which is an algebraic relation over a constant number of points. The data structure is updated only if a certificates fails. A key parameter of a KDS is the maximum total number of certificate failures over all possible simple motions of $n$ objects. For typical tessellations ({\em e.g.}, triangulations~\cite{KaplanRS11} or pseudo-triangulation~\cite{KirkpatrickS02}) or moving points in the plane, a basic certificate is the orientation of a triple of points, which changes only if the three points are collinear.

We are interested in the maximum and minimum number of collinearities among $n$ kinetic points in the plane, each of which moves with constant velocity. A {\em $k$-collinearity} is a pair $(L,t)$ of a line $L$ and a time $t$ such that $L$ contains at least $k$ points at time $t$, the points along $L$ do not all coincide, and not all of them are collinear at all times. The last two conditions help to discard a continuum of trivial collinearities: we are not interested in $k$ points that coincide, or are always collinear ({\em e.g.} if they move with the same velocity).

\smallskip\noindent{\bf Results.}
The {\em maximum} number of 3-collinearities among $n$ kinetic points in the plane, each moving with constant velocity, is $2\binom{n}{3}$. In particular, if three points are not always collinear, then they become collinear at most twice. Moreover, the maximum is attained for a kinetic point set where no three points are always collinear.
We also show that, for constant $k$, the number of $k$-collinearities is $O(n^3)$, and this bound is asymptotically tight. In the lower bound construction, $\Omega(n^3/k^4 + n^2/k^2)$ $k$-collinearities occur such that at each  $k$-collinearity at most $\lceil k/2\rceil$ of the points are always collinear.

The {\em minimum} number of collinearities among $n$ kinetic points in the plane is obviously 0. Consider, for example, $n$ points in general position that have the same velocity. We construct $n$ kinetic points that move with pairwise distinct speeds in different directions, and yet they admit no 3-collinearities.

We assume an infinite time frame $(-\infty,\infty)$. The motion of a point $p$ in $\mathbb{R}^d$ can be represented by its trajectory in $\mathbb{R}^{d+1}$, where the last (``vertical'') dimension is time. If a point $p$ moves with constant velocity in $\mathbb{R}^d$, its trajectory is a nonhorizontal line $L_p\subset \mathbb{R}^{d+1}$. Every algebraic condition on kinetic points in $\mathbb{R}^d$  has an equivalent formulation in terms of their trajectories in $\mathbb{R}^{d+1}$. We use both representations throughout this paper.

\smallskip\noindent{\bf Related previous results.}
Previous research primarily focused on collisions. Two kinetic points $p,q\in \mathbb{R}^d$ collide if and only if their trajectories $L_p,L_q\subset \mathbb{R}^{d+1}$ intersect. A {\em $k$-collision} is a pair $(P,t)$ of a point $P\in \mathbb{R}^d$ and a time $t$ such that at least $k$ kinetic points meet at $P$ at time $t$, but not all these points are always coincident. It is easy to see that for $n$ points in $\mathbf{R}^1$, each moving with constant velocity, the number of 2-collisions is at most $\binom{n}{2}$, and this bound is tight.
The number of $k$-collisions in $\mathbf{R}^1$ is $O(n^2/k^3 + n/k)$, and this bound is also the best possible, due to the Szemer\'edi-Trotter theorem \cite{Szemeredi83}.

Without additional constraints, the bounds for the number of collisions remains the same in $\mathbb{R}^d$ for every $d\geq 1$, since the points may be collinear at all times. Sharir and Feldman~\cite{Sharir94,FeldmanS05} considered the number of 3-collisions in the plane among points that are not always collinear. The trajectories of such a 3-collision form a so-called ``joint'' in 3-space. Formally, in an arrangement of $n$ lines in $\mathbb{R}^{d+1}$, a {\em joint} is a point incident to at least $d+1$ lines, not all of which lie in a hyperplane. Recently, Guth and Katz~\cite{Guth10} proved that $n$ lines in $\mathbb{R}^3$ determine $O(n^{3/2})$ joints. Their proof was later  generalized and simplified~\cite{ElekesKS11,KaplanSS10}: $n$ lines in $\mathbb{R}^{d+1}$ determine $O(n^{(d+1)/d})$ joints. These bounds are the best possible, since $\Theta(n^{(d+1)/d})$ joints can be realized by $n$ axis-parallel lines arranged in a grid-like fashion in $\mathbb{R}^d$. However no nontrivial bound is known for the number of joints under the additional constraint that no $d$ lines lie in a hyperplane.

A $k$-collinearity is the natural generalization of a $k$-collision in dimensions $d\geq 2$. It is easy to give a $\Theta(n^3)$ bound on the maximum number of 3-collinearities in the plane, since three random points, with random velocities, form $\Theta(1)$ collinearities in expectation. However, a 4-collinearity assumes an algebraic constraint on the trajectories of the 4 kinetic points. Here we present initial results about a new concept, including tight bounds on the number of 3-collinearities in the plane, and asymptotically tight bounds on the number of $k$-collinearities in the plane, for constant $k$.

\smallskip\noindent{\bf Organization.} We present our results for the maximum number of 3- and $k$-collinearities in Section~\ref{sec:upper}. We construct a kinetic point set with no collinearities in Section~\ref{sec:nocollin} and conclude with open problems in Section~\ref{sec:conclude}.

\section{Upper bound for 3-collinearities}\label{sec:upper}

Given any two kinetic points $a$ and $b$ in the plane, denote by $S_{a, b}$ the set of point-time pairs in $\mathbb{R}^3$ that form a 3-collinearity with $a$ and $b$.
This will be the set of all horizontal lines that intersect both $L_a$ and $L_b$.
We can find the times at which a third point, $c$, is collinear with $a$ and $b$ by characterizing the set $L_c \cap S_{a, b}$.
In particular, the cardinality of $L_c \cap S_{a,b}$ is the number of 3-collinearities formed by these three points.

The first issue is to characterize the set $S_{a, b}$.
For this purpose, we will use a classical geometric result.

\begin{lemma}[14.4.6 from \cite{berger1987geometry}]\label{lem:skew}
Let $L_a$ and $L_b$ be disjoint lines in a three-dimensional Euclidean affine space, and let $a$ and $b$ be points moving along $L_a$ and $L_b$ with constant speed. The affine line through $a$ and $b$ describes a hyperbolic paraboloid as $t$ ranges from $-\infty$ to $\infty$.
\end{lemma}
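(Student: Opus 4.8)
The plan is to realize the swept surface through an explicit bilinear parametrization and then locate it within the short list of doubly ruled quadrics. Writing the two moving points as $a(t)=a_0+t\vec{u}$ and $b(t)=b_0+t\vec{v}$, where $\vec{u},\vec{v}$ are the constant velocities, the affine line through $a(t)$ and $b(t)$ is swept out by
\[
P(s,t)=(1-s)\,a(t)+s\,b(t)=a_0+s(b_0-a_0)+t\vec{u}+st(\vec{v}-\vec{u}),\qquad s,t\in\mathbb{R}.
\]
Since $P$ is affine in $s$ for each fixed $t$ and affine in $t$ for each fixed $s$, the image carries two families of straight lines, so (being non-planar) it lies on a quadric. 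First I would record the ruling directions: for fixed $t$ the direction is $\partial_s P=(b_0-a_0)+t(\vec{v}-\vec{u})$, and for fixed $s$ it is $\partial_t P=\vec{u}+s(\vec{v}-\vec{u})$. Note that the constant-speed hypothesis is exactly what makes these directions \emph{affine} in the parameter, hence confined to a fixed plane, which is the whole point.

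The heart of the argument is to certify that this doubly ruled quadric is a hyperbolic paraboloid and not a hyperboloid of one sheet. I would use the \emph{director-plane} criterion: the hyperbolic paraboloid is precisely the nondegenerate doubly ruled quadric each of whose two rulings is parallel to a fixed plane, the two planes being distinct; for the hyperboloid of one sheet the ruling directions instead sweep out a quadric cone and lie in no plane. From the formulas above, every first-family direction lies in $\Pi_1:=\mathrm{span}(b_0-a_0,\vec{v}-\vec{u})$ and every second-family direction lies in $\Pi_2:=\mathrm{span}(\vec{u},\vec{v})$, so both rulings do have director planes; equivalently, the generating line $a(t)b(t)$ slides along the two fixed lines $L_a,L_b$ while staying parallel to $\Pi_1$, which is the classical generation of a hyperbolic paraboloid.

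For nondegeneracy I would tie everything back to the hypothesis that $L_a$ and $L_b$ are skew (the relevant disjoint configuration), i.e. $\det[\vec{u},\vec{v},\,b_0-a_0]\neq 0$, equivalently $b_0-a_0\notin\mathrm{span}(\vec{u},\vec{v})$. This single condition does all the work: it forces $\vec{u},\vec{v}$ to be independent (so $\Pi_2$ is a genuine plane and the parametrization is two-dimensional), and it gives $\Pi_1\neq\Pi_2$ since $b_0-a_0\in\Pi_1$ while $b_0-a_0\notin\Pi_2$, which is exactly the distinct-director-plane requirement. The excluded case $\vec{u}\parallel\vec{v}$ of parallel lines collapses $\Pi_2$ and degenerates the surface to a plane, and this is precisely what skewness rules out.

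I expect the main obstacle to be the classification step itself, namely cleanly distinguishing ``hyperbolic paraboloid'' from ``hyperboloid of one sheet'' and from the degenerate quadrics. The director-plane criterion makes this conceptually painless, but if a fully self-contained argument is wanted I would instead eliminate $s$ and $t$ from the three scalar components of $P(s,t)$ to obtain the explicit quadric equation and read its type off the $3\times 3$ matrix of the quadratic part; I anticipate a quadratic part of rank $2$ with indefinite $(1,1)$ signature together with a surviving linear term along the kernel direction, which is the normal form of a hyperbolic paraboloid. Either route reduces the lemma to the standard classification of quadrics, with skewness supplying every non-vanishing that is needed.
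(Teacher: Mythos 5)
Your proof is correct, but it necessarily takes a different route from the paper, because the paper does not prove Lemma~\ref{lem:skew} at all: it quotes the statement from Berger and merely remarks that it is the special case of the Hilbert--Cohn-Vossen three-skew-lines construction in which the three chosen lines (three horizontal lines meeting both $L_a$ and $L_b$) are parallel to a common plane, which is what forces the doubly ruled surface to be a hyperbolic paraboloid rather than a hyperboloid of one sheet. Your bilinear parametrization $P(s,t)=a_0+s(b_0-a_0)+t\vec{u}+st(\vec{v}-\vec{u})$ replaces that citation with an explicit argument, and it checks out: skewness is equivalent to $\det[\vec{u},\vec{v},b_0-a_0]\neq 0$, which gives non-planarity (the surface contains the skew lines $L_a,L_b$) and the distinctness of the director planes $\Pi_1=\mathrm{span}(b_0-a_0,\vec{v}-\vec{u})$ and $\Pi_2=\mathrm{span}(\vec{u},\vec{v})$. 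Your fallback elimination also works exactly as you predicted: after an affine normalization $a(t)=(t,0,0)$, $b(t)=(v_1t,v_2t,1)$ with $v_2\neq 0$ (skewness), the swept set satisfies $v_2xz-(v_1-1)yz-y=0$, whose quadratic part $z\,(v_2x-(v_1-1)y)$ has rank $2$ and signature $(1,1)$, and whose linear part $-y$ is nonzero on the kernel direction $(v_1-1,v_2,0)$ --- the affine normal form of a hyperbolic paraboloid. As for what each approach buys: the paper's citation is short and places the lemma in its classical context (in the paper's application the common director plane is literally the horizontal plane, since the lines $a(t)b(t)$ there are constant-time lines; your $\Pi_1$ is the general version of this), whereas your argument is self-contained, makes explicit that constant speed is exactly what makes the sweep bilinear, and isolates where skewness enters. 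Three points to tighten if you write it up: (i) the director-plane route still leans on two classical facts (a non-planar doubly ruled surface is a quadric, and the director-plane characterization of the paraboloid), so only your elimination variant is genuinely self-contained; (ii) the lemma asserts the moving line ``describes'' the paraboloid, so you should note that the sweep covers the whole surface --- it does, since your parametrization hits every point of the quadric (equivalently, each ruling of a hyperbolic paraboloid is an $\mathbb{R}$-family of pairwise skew lines whose union is the entire surface, unlike the hyperboloid of one sheet, whose rulings are circle-parametrized); (iii) your reading of ``disjoint'' as ``skew'' is the right one, since parallel disjoint lines sweep only a plane, and it matches how the paper actually uses the lemma: the parallel and intersecting cases are handled separately in Lemma~\ref{lem:intersectionSurface}.
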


This is a special case of a construction that produces a hyperboloid of one sheet or a hyperbolic paraboloid from three skew lines \cite[p. 15]{hilbert1952geometry}.
Given three skew lines, the union of all lines that intersect all three given lines is a doubly ruled surface.
If the three given lines are all parallel to some plane, the surface will be a hyperbolic paraboloid; otherwise, the surface will be a hyperboloid of a single sheet.

Given two kinetic points $a$ and $b$ moving at constant velocity, we can arbitrarily choose three horizontal lines that intersect $L_a$ and $L_b$ to use with the above construction.
Since horizontal lines are parallel to a horizontal plane, the resulting surface will be a hyperbolic paraboloid.

This characterizes $S_{a, b}$ in the case that $L_a$ and $L_b$ are skew.
It remains to extend the characterization to the cases that $a$ and $b$ collide or have the same speed and direction.

\begin{lemma}\label{lem:intersectionSurface}
Given two kinetic points, $a$ and $b$, each moving with constant velocity, there are three possibilities for $S_{a, b}$.
\begin{enumerate}
\item If $a$ and $b$ have the same direction and speed, then $S_{a,b}$ is a non-horizontal plane.
\item If $a$ and $b$ collide, then $S_{a,b}$ is the union of a horizontal and a non-horizontal plane.
\item Otherwise, $S_{a,b}$ is a hyperbolic paraboloid.
\end{enumerate}
\end{lemma}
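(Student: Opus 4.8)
The plan is to reduce all three cases to a single uniform description of $S_{a,b}$ and then read off the geometry. Since $a$ and $b$ move with finite velocity, the trajectories $L_a$ and $L_b$ are both non-horizontal, so each meets every horizontal plane $\{z=t\}$ in exactly one point; write $a(t)$ and $b(t)$ for these points. A horizontal line lives at a fixed height $t$, meets $L_a$ exactly when it passes through $a(t)$, and meets $L_b$ exactly when it passes through $b(t)$. Hence a horizontal line belongs to $S_{a,b}$ if and only if, at its height $t$, it passes through both $a(t)$ and $b(t)$. This gives the key dichotomy: for heights $t$ with $a(t)\neq b(t)$ the only contribution is the single line $\overline{a(t)b(t)}$, whereas for any height $t$ with $a(t)=b(t)$ (a collision) the whole horizontal plane $\{z=t\}$ lies in $S_{a,b}$.

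Next I would split into three cases according to the relative position of $L_a$ and $L_b$, which correspond exactly to the three hypotheses. Equal direction and speed means $L_a$ and $L_b$ are parallel; a collision means they intersect; and the remaining case is that they are skew. The skew case is already handled: by the discussion following Lemma~\ref{lem:skew}, the union $\bigcup_t \overline{a(t)b(t)}$ is a doubly ruled surface, and because the horizontal generators are all parallel to the plane $\{z=0\}$ it is a hyperbolic paraboloid, giving item~3.

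For the parallel case, the two distinct lines span a unique plane $\Pi$, which is non-horizontal because it contains the non-horizontal line $L_a$. Same velocity gives $a(t)-b(t)\equiv a(0)-b(0)\neq 0$, so there is no collision and every $\overline{a(t)b(t)}$ is exactly the horizontal cross-section of $\Pi$ at height $t$; letting $t$ range over $\mathbb{R}$ sweeps out all of $\Pi$, yielding item~1. For the collision case the two lines are coplanar, spanning a non-horizontal plane $\Pi$; for every height $t$ other than the collision time $t_0$ the line $\overline{a(t)b(t)}$ is again the horizontal slice of $\Pi$, so these sweep out $\Pi$ minus a single horizontal line, and the one missing slice at height $t_0$ is absorbed into the horizontal plane $\{z=t_0\}$ contributed by the collision. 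Thus $S_{a,b}=\Pi\cup\{z=t_0\}$, the union of a non-horizontal and a horizontal plane, giving item~2.

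The step I expect to require the most care is the collision case: one must check that the horizontal cross-sections $\overline{a(t)b(t)}$ for $t\neq t_0$, together with the full plane $\{z=t_0\}$, recombine to exactly two planes, and in particular that the slice of $\Pi$ at the collision height --- which is not produced as any $\overline{a(t)b(t)}$ --- is correctly recovered inside the horizontal plane. Verifying that $\Pi$ is genuinely non-horizontal in each degenerate case, and that the collision occurs at a single time (so that only one horizontal plane appears), are the remaining routine points.
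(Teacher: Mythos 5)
Your proof is correct and follows essentially the same route as the paper: the same three-way case split (parallel, intersecting, skew trajectories), with the coplanar cases handled via the spanning plane $\Pi$ and horizontal slices, and the skew case deferred to Lemma~\ref{lem:skew}. Your treatment of the collision-height slice is somewhat more careful than the paper's, but the underlying argument is identical.
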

\begin{proof}
If $L_a$ and $L_b$ intersect or are parallel, then there is a unique plane $\Pi$ that contains both $L_a$ and $L_b$.
Since neither $L_a$ nor $L_b$ is horizontal, $\Pi$ is not horizontal. Every point in $\Pi$ belongs to the union of all horizontal lines containing a point from each of $L_a$ and $L_b$.

Since two non-coincident points span a unique line and the intersection of $\Pi$ with a horizontal plane is a line, if $L_a$ and $L_b$ are parallel, then $S_{a, b} = \Pi$.
This covers the case that $a$ and $b$ have the same direction and speed.

If $L_a$ and $L_b$ intersect, then every point in the horizontal plane $\Pi'$ containing the intersection point $L_a\cap L_b$ is on a horizontal line containing a point from each of $L_a$ and $L_b$. In this case, $S_{a,b} = \Pi \cup \Pi'$. This covers the case that $a$ and $b$ collide.

If $L_a$ and $L_b$ are skew, Lemma \ref{lem:skew} implies that $S_{a,b}$ is a hyperbolic paraboloid.
This covers the generic case.
\end{proof}

\begin{lemma}
Three points in the plane, each moving with constant velocity, will either be always collinear or collinear at no more than two distinct times.
\end{lemma}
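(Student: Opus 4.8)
The plan is to reduce the problem to a statement about the intersection of a line with the surface $S_{a,b}$, using the characterization just established in Lemma~\ref{lem:intersectionSurface}. Fix three kinetic points $a$, $b$, $c$, each moving with constant velocity, with trajectories $L_a$, $L_b$, $L_c$ in $\mathbb{R}^3$. By the discussion preceding Lemma~\ref{lem:skew}, the number of distinct times at which $c$ is collinear with $a$ and $b$ (forming a genuine $3$-collinearity) equals the cardinality of $L_c \cap S_{a,b}$, provided the three points are not always collinear. So the claim becomes: either the three points are always collinear, or $|L_c \cap S_{a,b}| \le 2$.

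First I would dispose of the degenerate cases (items~1 and~2 of Lemma~\ref{lem:intersectionSurface}), where $S_{a,b}$ is a plane or a union of two planes. If $S_{a,b}$ is a single non-horizontal plane $\Pi$ and $L_c \not\subset \Pi$, then $L_c$ meets $\Pi$ in at most one point, giving at most one collinearity; if $L_c \subset \Pi$, I must check that this forces the three points to be always collinear (since $\Pi$ consists precisely of the horizontal lines joining $L_a$ and $L_b$, a trajectory lying inside $\Pi$ is collinear with $a,b$ at every time), which is exactly the excluded case. For the collision case, $S_{a,b} = \Pi \cup \Pi'$ with $\Pi'$ horizontal; a non-horizontal line $L_c$ meets each plane in at most one point, for a total of at most two, and I would note that any intersection with $\Pi'$ corresponds to the collision time, which we must argue is either a legitimate collinearity counted once or excluded by the ``not all coincident'' convention.

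The main case, and the crux of the proof, is item~3: $S_{a,b}$ is a hyperbolic paraboloid $Q$, a smooth quadric surface. The key fact is purely algebraic: a line and a quadric surface in $\mathbb{R}^3$ intersect in at most two points unless the line lies entirely on the surface. Substituting the parametrization of the line $L_c$ into the degree-$2$ defining polynomial of $Q$ yields a univariate polynomial of degree at most $2$, which has at most two real roots unless it vanishes identically. Thus $|L_c \cap Q| \le 2$ whenever $L_c \not\subset Q$. The remaining subtlety is the case $L_c \subset Q$: here I would invoke that $Q$ is \emph{doubly ruled} and that $L_c$, being a non-horizontal line, can belong to at most one of the two rulings; I then need to show that a trajectory lying on the ruled surface in this manner again corresponds to the three points being always collinear, or else produces only finitely many (in fact at most two) collinearities, matching the excluded hypothesis.

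The step I expect to be the main obstacle is the careful bookkeeping of the boundary between ``always collinear'' and ``$L_c$ contained in the surface.'' Geometrically the two rulings of the hyperbolic paraboloid correspond to two different families of lines, and I must verify that a horizontal line of the surface records an instantaneous collinearity while a non-horizontal generator records a persistent one; pinning down exactly which containment forces permanent collinearity—and confirming that every other configuration gives the clean bound of two—is where the argument requires real care rather than routine computation. Once that dichotomy is established, the theorem follows immediately by summing the at-most-two count.
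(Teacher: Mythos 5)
Your proposal is correct and takes essentially the same route as the paper: invoke Lemma~\ref{lem:intersectionSurface} and then observe that $L_c$, unless contained in $S_{a,b}$, meets a surface of degree at most $2$ in at most two points. The ``main obstacle'' you anticipate---showing that containment of $L_c$ in $S_{a,b}$ forces permanent collinearity---is in fact immediate from the definition of $S_{a,b}$ as the locus of point-time pairs collinear with $a$ and $b$, so no analysis of the rulings is needed; the paper simply treats containment as the always-collinear alternative and stops there.
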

\begin{proof}

Label the points $a$, $b$, and $c$.
By lemma \ref{lem:intersectionSurface}, $S_{a,b}$  is a plane, the union of two planes, or a hyperbolic paraboloid.
Every time $L_c$ intersects $S_{a,b}$, the points $a$, $b$, and $c$ are collinear. Since a plane is a surface of degree $1$ and a hyperbolic paraboloid is a surface of degree $2$, $L_c$ cannot intersect $S_{a,b}$ more than twice without being contained in $S_{a,b}$.
\end{proof}

\begin{theorem}\label{th:threePointLineUpperBound}
A set of $n$ points in the plane, each moving with constant speed and direction,
determines no more than $2\binom{n}{3}$ 3-collinearities.
\end{theorem}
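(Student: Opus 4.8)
The plan is to reduce the count of 3-collinearities to a double count over triples of points, using the preceding lemma that bounds the number of collinear times of a single non-always-collinear triple. First I would set up a charging map that sends each 3-collinearity $(L,t)$ to a pair $(\{a,b,c\},t)$ consisting of a triple of the points and the time $t$. Given a 3-collinearity $(L,t)$, the line $L$ carries at least three points at time $t$, these points do not all coincide, and they are not all always collinear. The first task is to select, among the points on $L$ at time $t$, a triple $\{a,b,c\}$ that is itself \emph{not} always collinear and whose points are \emph{not} all coincident at time $t$.

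To justify this selection I would argue as follows. Since the points on $L$ at time $t$ do not all coincide, two of them, say $p$ and $q$, occupy distinct positions at time $t$; in particular $p$ and $q$ do not share a trajectory, so the affine line they span moves nontrivially in time. If every triple of the form $\{p,q,r\}$ (with $r$ ranging over the remaining points on $L$ at time $t$) were always collinear, then every such $r$ would lie on the moving line through $p$ and $q$ at all times, forcing all the points on $L$ to be always collinear and contradicting the definition of a 3-collinearity. Hence some triple $\{p,q,r\}$ is not always collinear; this triple contains the pair $p,q$, which is already non-coincident at time $t$, so it meets both requirements, and I take $\{a,b,c\}=\{p,q,r\}$.

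Next I would check that the map $(L,t)\mapsto(\{a,b,c\},t)$ is injective. From the image one recovers $t$ directly, and one recovers $L$ as the unique line spanned by $a$, $b$, $c$ at time $t$, which is well defined precisely because the chosen triple is not all coincident at time $t$. Therefore the number of 3-collinearities is at most the number of pairs $(\{a,b,c\},t)$ for which the triple is not always collinear and is collinear at time $t$. Invoking the previous lemma, any triple moving with constant velocity that is not always collinear is collinear at no more than two distinct times, so each triple contributes at most two admissible values of $t$. Summing over all triples, of which there are at most $\binom{n}{3}$, bounds the number of such pairs by $2\binom{n}{3}$, and injectivity transfers this bound to the 3-collinearities.

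The step I expect to be the main obstacle is the selection/degeneracy argument in the second paragraph: one must be certain that every 3-collinearity genuinely contains a witness triple that is simultaneously not always collinear and not all coincident at the collinear time, despite the possibility that some points on $L$ pairwise coincide at time $t$ or that some triples on $L$ happen to be always collinear. Once this selection is justified by the contradiction above, the remainder is a routine double count, and no calculation beyond the cited lemmas is needed.
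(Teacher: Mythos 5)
Your proof is correct and follows essentially the same route as the paper: charge each 3-collinearity to a triple of points and invoke the preceding lemma that a triple moving with constant velocity, if not always collinear, is collinear at most twice, giving the bound $2\binom{n}{3}$. The paper's own proof is exactly this double count stated in one line; your additional witness-selection and injectivity arguments rigorously handle the degenerate cases (more than three points on $L$, coincident points, always-collinear sub-triples) that the paper glosses over.
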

\begin{proof}
There are $\binom{n}{3}$ subsets of $3$ points, each of which forms at most two 3-collinearities.
\end{proof}

Clearly, this bound applies directly to $k$-collinearities, for any $k\geq 3$.
If no three points are always collinear, this bound can easily be improved for $k > 3$.

\begin{theorem}\label{th:kPointLineUpperBound}
A set of $n$ points in the plane, each moving with constant speed and direction,
and no three of which are always collinear, determines no more than
$2\binom{n}{3}/\binom{k}{3}$ $k$-collinearities.
\end{theorem}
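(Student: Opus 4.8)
The plan is to prove this by a charging (double-counting) argument that relates each $k$-collinearity to the $3$-collinearities it contains, and then to invoke the bound established in the preceding lemma. First I would fix an arbitrary $k$-collinearity $(L,t)$. By definition at least $k$ of the points lie on $L$ at time $t$, so every one of the (at least) $\binom{k}{3}$ triples chosen from these points is collinear at time $t$. Because no three of the $n$ points are always collinear by hypothesis, each such triple is a genuine $3$-collinearity, namely the pair $(L,t)$ witnessed through that triple. Thus each $k$-collinearity gives rise to at least $\binom{k}{3}$ incidences of the form (triple, time at which the triple is collinear).

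The key observation is that a triple $\{a,b,c\}$ together with a time $t$ at which its three (distinct) points are collinear determines the supporting line $L$ uniquely, and hence determines the pair $(L,t)$. Consequently the sets of (triple, time) incidences arising from two distinct $k$-collinearities are disjoint: if a single triple-and-time incidence were shared, the two $k$-collinearities would have the same line and the same time, i.e.\ they would be equal. So distinct $k$-collinearities consume disjoint blocks of at least $\binom{k}{3}$ incidences each.

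Next I would bound the total number of (triple, time) incidences from above. By the lemma stating that three points which are not always collinear are collinear at no more than two distinct times, combined with the hypothesis that no three points are always collinear, each of the $\binom{n}{3}$ triples contributes at most two incidences. Hence the total is at most $2\binom{n}{3}$. Combining the two counts, if $N$ denotes the number of $k$-collinearities, then $N\,\binom{k}{3}\le 2\binom{n}{3}$, which rearranges to $N\le 2\binom{n}{3}/\binom{k}{3}$, as claimed.

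The step I expect to need the most care is the degenerate situation in which some of the $k$ points coincide at time $t$. The definition of a $k$-collinearity only forbids \emph{all} of the points from coinciding, so in principle a pair could collide exactly at the collinearity, which would make the corresponding triples contain a repeated point and could break both the count $\binom{k}{3}$ and the well-definedness of the map from (triple, time) to $(L,t)$. I would argue that such coincidences can be absorbed—either by noting that a collision contributes only a measure-zero set of times and so does not affect the generic triples, or by restricting attention to the pairwise-distinct points on $L$ at time $t$—so that at least $\binom{k}{3}$ honest incidences (three pairwise distinct, not-always-collinear points) are still produced and the injectivity of the charging map is preserved.
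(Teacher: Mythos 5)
Your proof is correct and is essentially the paper's own argument: the paper's two-sentence proof charges each $k$-collinearity to at least $\binom{k}{3}$ distinct instantaneously collinear triples and invokes Theorem~\ref{th:threePointLineUpperBound} to bound the total number of such triple--time incidences by $2\binom{n}{3}$, exactly your double count. The injectivity of the charging map and the degenerate coincidence issue you flag at the end are real subtleties, but the paper glosses over them entirely, so your version is, if anything, the more careful one.
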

\begin{proof}
By Theorem \ref{th:threePointLineUpperBound}, there are at most $2 \binom{n}{3}$ sets of $3$ instantaneously collinear points. A $k$-collinearity accounts for at least $\binom{k}{3}$ distinct sets of $3$ instantaneously collinear points.
\end{proof}

\subsection{The $2 \binom{n}{3}$ bound is tight for 3-collinearities}
\begin{theorem}\label{th:tight}
Theorem \ref{th:threePointLineUpperBound} is tight for the case $k=3$.
\end{theorem}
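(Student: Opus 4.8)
The plan is to exhibit a single kinetic point set in which every one of the $\binom{n}{3}$ triples becomes collinear at exactly two distinct instants, and in which the resulting $2\binom{n}{3}$ collinearities are pairwise distinct as $(L,t)$ pairs; by Theorem~\ref{th:threePointLineUpperBound} this matches the upper bound and is therefore tight. The starting observation is that the collinearity condition for a triple is a \emph{quadratic} in the time $t$. Writing $p_i(t)=p_i(0)+t\,\dot p_i$ and expanding the $3\times 3$ incidence determinant by multilinearity in $t$, one obtains $D_{abc}(t)=\alpha t^2+\beta t+\gamma$, where $\gamma$ is twice the signed area of the triangle on the initial positions $p_a(0),p_b(0),p_c(0)$, and $\alpha$ is twice the signed area of the triangle on the velocity vectors $\dot p_a,\dot p_b,\dot p_c$ regarded as points. (This is consistent with Lemma~\ref{lem:intersectionSurface}: $L_c$ meets the degree-$2$ surface $S_{a,b}$ in at most two points.) The triple contributes two genuine collinearities precisely when $\alpha\neq 0$ and the discriminant $\beta^2-4\alpha\gamma$ is positive.

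The key idea is to force the discriminant positive for \emph{every} triple at once by a sign condition: if $\alpha$ and $\gamma$ have opposite signs then $\alpha\gamma<0$, so $\beta^2-4\alpha\gamma>0$ automatically. To arrange this simultaneously, I would place the initial positions $p_i(0)$ in general position (no three collinear, so every $\gamma_{abc}\neq 0$) and take each velocity to be a fixed \emph{orientation-reversing} linear image of the initial position, for instance $\dot p_i=R\,p_i(0)$ with $R$ a reflection. Since an orientation-reversing map negates every signed area, this yields $\alpha_{abc}=-\gamma_{abc}$ for all triples, whence $\alpha_{abc}\gamma_{abc}=-\gamma_{abc}^2<0$ and each triple has two distinct real collinearity times.

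The main obstacle is that this clean construction over-collapses: because $p_i(t)=(I+tR)\,p_i(0)$ applies one global linear map to all points, at the two times where $I+tR$ is singular \emph{every} point lands on a common line, so all triples share the same two collinearities and the count degenerates to $O(1)$ distinct pairs instead of $2\binom{n}{3}$. The resolution I plan is to break the global structure by a small generic perturbation of the velocities, $\dot p_i=R\,p_i(0)+\varepsilon w_i$. For $\varepsilon$ small enough the finitely many strict inequalities $\alpha_{abc}\gamma_{abc}<0$ persist, so every triple still has exactly two collinearities; meanwhile, for generic $w_i$ and generic small $\varepsilon$ no four points are ever collinear simultaneously and no two triples are collinear on the same line at the same time, since each such coincidence is an algebraic condition of positive codimension in the parameters. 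Hence the $2\binom{n}{3}$ collinearities are distinct and none is degenerate (the involved points do not all coincide, and no triple is always collinear since $\alpha\neq 0$), which together with Theorem~\ref{th:threePointLineUpperBound} yields exactly $2\binom{n}{3}$. The delicate part of the write-up will be verifying that a single $\varepsilon$ can be chosen small enough for all the sign conditions yet generic enough for all the distinctness conditions; this follows because the admissible values of $\varepsilon$ form the complement of a finite set inside every neighborhood of $0$.
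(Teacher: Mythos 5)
Your proposal is correct in substance but follows a genuinely different route from the paper. The paper's proof is an explicit construction: $n$ unit-speed points placed on a circle of radius $1$ centered at $(-1,1)$, each aimed to pass through the origin, with directions $\theta_i = 3\pi/2 + \pi/(4i)$; the argument is then topological/combinatorial rather than algebraic --- for any triple, the cyclic order of the points on a convex curve through them shows that the ``middle'' point lies in the left halfplane of the line through the outer two at $t=0$, but in the right halfplane both as $t\to+\infty$ and as $t\to-\infty$ (when the points approximate a circle of radius $|t|$ about the origin), forcing at least two sign changes, hence exactly two collinearities per triple by the upper bound of Lemma~\ref{lem:intersectionSurface}. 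Your proof replaces this with an algebraic mechanism: the observation that the incidence determinant is a quadratic $\alpha t^2+\beta t+\gamma$ whose leading and constant coefficients are signed areas of the velocity and position triangles, so that choosing velocities as a reflection image of positions forces $\alpha\gamma<0$ and a positive discriminant for every triple simultaneously. (These are really two faces of the same sign-change phenomenon: in the paper the orientation at $t=0$ is opposite to the orientation at $t=\pm\infty$, which is exactly $\alpha\gamma<0$.) What your route buys is an explicit treatment of an issue the paper glosses over: since a $3$-collinearity is a pair $(L,t)$, attaining $2\binom{n}{3}$ requires that no two triples ever share the same line at the same time (no four points simultaneously collinear), and you correctly notice that your unperturbed construction fails this catastrophically (all collinearities collapse onto the two singular times of $I+tR$) and repair it by perturbation. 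What the paper's route buys is concreteness: no genericity argument is needed, everything is verified directly for the exhibited points. The one place your write-up would need to be tightened is the genericity step itself: to conclude that the bad conditions cut out only finitely many values of $\varepsilon$, you must check they are not identically zero on your restricted family $\dot p_i = R\,p_i(0)+\varepsilon w_i$; this does hold --- for any fixed $\varepsilon\neq 0$ the map $w_i\mapsto R\,p_i(0)+\varepsilon w_i$ is a bijection on velocity assignments, so a single witness configuration (e.g., velocities making the four points never simultaneously collinear, which certainly exists) shows each resultant is a nonzero polynomial in $(w,\varepsilon)$ --- but this witness argument should be stated rather than asserted.
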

\begin{proof}
We construct a set of $n$ kinetic points, no three always collinear, such that they admit exactly $2\binom{n}{3}$ 3-collinearities.

Let the points be $\{p_1, p_2, \ldots, p_n\}$. Each point moves with speed $1$.
The direction of motion of point $p_i$ forms an angle of $\theta_i = 3\pi/2 + \pi/(4i)$ with the positive $x$ direction.
At time $t=0$, each point is on a circle of radius $1$ centered at $(-1,1)$, and positioned so that its direction of travel will cause it to cross the origin at some later time. Since two locations on the circle might satisfy this property, we choose the one closer to the origin (Fig. \ref{fig:manyLines0}).

\begin{figure}[h]
\center{
\includegraphics[height=2.5in,width=2.5in]{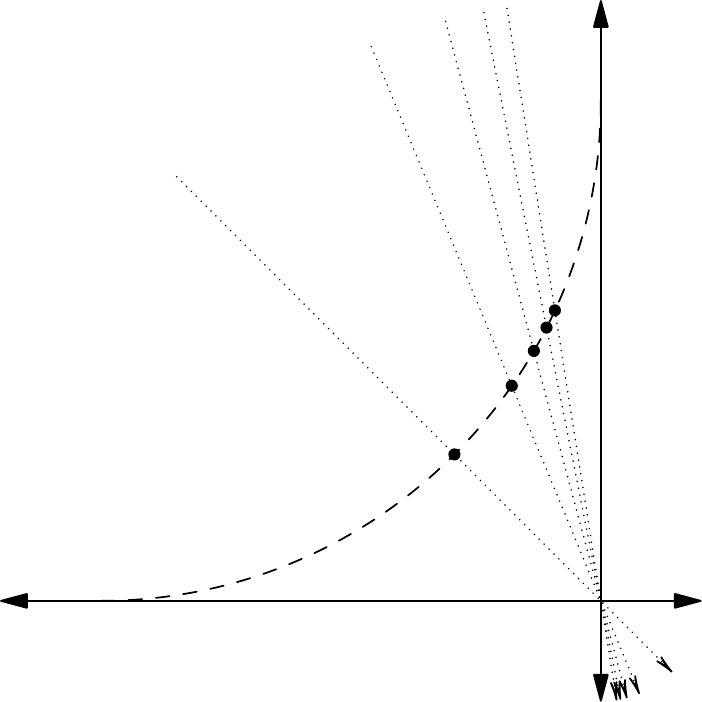}}
\caption{\label{fig:manyLines0} A set of kinetic points forming $2 \binom{n}{3}$ three point lines over the time interval $(-\infty,\infty)$, at time $0$.}
\end{figure}

At time $t=0$, no three points are collinear, so no triple of points is always collinear.
Choose any three elements from $\{p_1, p_2, ...,p_n\}$, say $p_j$, $p_k$, and $p_l$, such that
$\theta_j < \theta_k < \theta_l$. We will show that these points are collinear at two distinct times.

Let $H_R$ and $H_R$ denote the left and right halfplanes, respectively, determined by the directed line $p_jp_l$.
Let $C$ be a closed convex curve passing through $p_j$,  $p_k$, and $p_l$ such that it crosses line $p_jp_l$ at $p_j$ and $p_l$ only. We can determine which half-plane contains $p_k$ from the cyclic order of the three points on $C$. If the clockwise order is $(p_j, p_k, p_l)$, then $p_k \in H_L$; if the clockwise order is $(p_j, p_l, p_k)$, then $p_k \in H_R$.

At time $0$, the points are distributed on the circle of radius $1$ with center $(-1,1)$, and the clockwise order of the chosen points on this circle is $(p_j, p_k, p_l)$. Thus, $p_k \in H_L$.

Let $c_i$ be the distance between $p_i$ and the origin at time $0$.
Since all points are initially moving toward the origin at a speed of $1$, the distance between $p_i$ and the origin is $|c_i-t|$ at time $t$.

We now establish that $p_k$ is in $H_R$ for $|t| \gg 1$. If $t \gg 1$, all of the points $\{p_1, p_2, ...,p_n\}$ will lie approximately on a circle of radius $t$ centered at the origin. The clockwise order of the points on a convex curve approximating this circle will be $(p_j, p_l, p_k)$, and $p_k \in H_R$. Likewise, when $t \ll -1$ the points will be approximately on a circle of radius $|t|$ (but at points antipodal to those when $t \gg 1$), and the order will be $(p_j, p_l, p_k)$ with $p_k \in H_R$. Figure~\ref{fig:manyLinesInf} depicts the configuration for $t \gg 1$.

\begin{figure}[h]
\center{
\includegraphics[height=2.5in,width=2.5in]{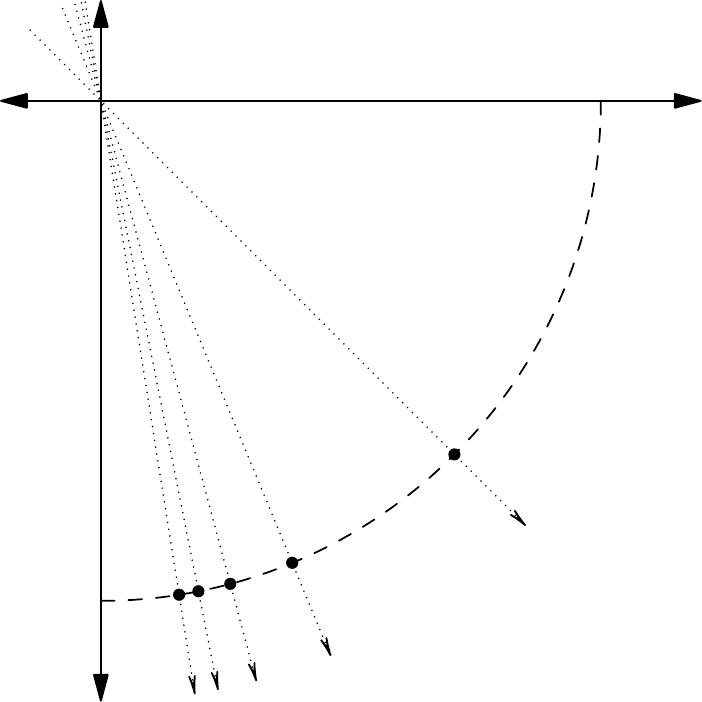}}
\caption{\label{fig:manyLinesInf} A set of kinetic points forming $2 \binom{n}{3}$ three point lines over the time interval $(-\infty,\infty)$, at time $\gg 1$.}
\end{figure}

Since $p_k$ alternates from $H_L$ to $H_R$ and back to $H_R$ as $t$ goes from negative to positive infinity,
there must exist times $t'$ and $t''$ at which the three points are collinear.
\end{proof}

The above construction is degenerate in the sense that the paths of the points are all concurrent through the origin.
Note that our argument is not sensitive to a small perturbation in the location or the direction of the points. The direction of motion of each point may be perturbed so that the trajectories are in general position.

Additionally, the construction may be altered so that the points travel at different speeds. If the speeds of $\{p_1, p_2, ..., p_n\}$ are not all the same, then the points will not approach a circle as $|t|$ approaches $\infty$.
However, as long as no three points are always collinear and the points approach some closed convex curve as $|t|$ approaches infinity, the arguments used will remain valid. For example, if the speed of point $p_i$ is $1/(1 - \cos(\theta_i)/2)$, then for $|t| \gg 1$, the points will be approximately distributed on an ellipse enclosing the origin. This ensures that any three points will be collinear at two distinct times,
so the set of $n$ points will have $2\binom{n}{3}$ 3-collinearities.

\subsection{The $O(n^3)$ bound is tight for fixed $k$}

By Theorem \ref{th:threePointLineUpperBound}, $n$ kinetic points moving with constant velocities determine $O(n^3)$ $k$-collinearities. Here for all integers $n\geq k\geq 3$, we construct a set of $n$ kinetic points that determines $\Omega(n^3/k^4 + n^2/k^2)$ $k$-collinearities.

First assume that $n\geq k^2$. We construct $n$ kinetic points with $\Omega(n^3/k^4)$ $k$-collinearities. The points will move on two parallel lines $L_1:x=0$ and $L_2:x=1$ in varying speeds, and a simultaneous $\lfloor k/2\rfloor$-collision on $L_1$ and a $\lceil k/2\rceil$-collision on $L_2$ defines a $k$-collinearity.

Without loss of generality we may assume that $n$ is a multiple of $k$.
Let $\{A_1, A_2,..., A_{\lfloor k/2 \rfloor}\}$ and $\{B_1, B_2, ..., B_{\lceil k/2 \rceil}\}$ be sets of $n/k$ points each. At time $0$, let $A_i = \{a_{i,j} = (0,j) :  j= 1,...,n/k\}$ for $1 \leq i \leq \lfloor k/2 \rfloor$, and let $B_i = \{b_{i,j} = (1,j) : j = 1,...,n/k\}$ for $1 \leq j \leq \lceil k/2 \rceil$. All point move in the direction $(0,1)$. The points in $A=\bigcup_{i=1}^{\lfloor k/2\rfloor}A_i$ are always in line $x=0$, and the point in $B=\bigcup_{i=1}^{\lceil k/2\rceil} B_i$ are always in line $x=1$. The speed of a point in set $A_i$ or $B_i$ is $i-1$; for example, each point in set $A_1$ has speed $0$.

At each time $t = \{0, 1, ..., n/(k\lfloor k/2 \rfloor)\}$, there are $(n/k - (k - 1)t)$ $\lfloor k/2 \rfloor$-way collisions among points in $A$ and $(n/k - (k-1)t)$ $\lceil k/2 \rceil$-way collisions among points in $B$.
Each line connecting a $\lfloor k/2 \rfloor$-collision among points in $A$ and a $\lceil k/2 \rceil$-collision among points in $B$ is a $k$-collinearity. Thus, at each time $t=\{0,1,...,n/(k\lfloor k/2 \rfloor) \}$, there are $(n/k - (k-1)t)^2$ k-collinearities. Taking the sum, the number of $k$-collinearities over $t=[0, \infty)$ is
\begin{eqnarray*}
\sum_{t = 0}^{n/(k\lfloor k/2 \rfloor)} (n/k - (k-1)t)^2 & \geq & \sum_{t = 0}^{n/(k\lfloor k/2 \rfloor)}(k-1)^2t^2 \\
& \geq & (k-1)^2 \sum_{t=0}^{n/(k\lfloor k/2 \rfloor)}t^2 \\
&=& \Omega(n^3/k^4).
\end{eqnarray*}

Now assume that $k\leq n<k^2$. We construct $n$ kinetic points with $\Omega(n^2/k^2)$ $k$-collinearities.
The $n$ points are partitioned into subsets, $A_1,A_2,\ldots , A_{\lfloor n/k\rfloor}$, each of size at least $\lceil k/2\rceil$. The points in each subset have a single $\lceil k/2\rceil $-collision at time $0$, at points in general position in the plane. Any line between two $\lceil k/2\rceil$-collisions is a $k$-collinearity. Hence there are $k$-collinearities is $\Omega(n^2/k^2)$.

\section{Kinetic point sets with no collinearities}\label{sec:nocollin}
It is clearly possible to have no 3-collinearities among $n$ kinetic points if the points move with the same direction and speed---this is simply a set of relatively static points, no three of which are collinear. Similarly, if we are only interested in collinearities in the time interval $(0,\infty)$, it is clearly possible to have no collinearities---any set of kinetic points will have a final 3-collinearity.

Less obviously, we can construct $n$ kinetic points, any two of which have different direction and speed,
that admit no 3-collinearities over the time interval $(-\infty, \infty)$.
\begin{theorem}
For every integer $n\geq 1$, there is a set of $n$ points in the plane, each moving with constant speed and direction, no two of the points having the same speed or direction, such that no three points are collinear over the time interval $(-\infty, \infty)$.
\end{theorem}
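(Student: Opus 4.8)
The plan is to pass to the complex picture and to choose the motion so that the entire point set is, at every instant, the image of its initial configuration under a single similarity transformation of the plane; since similarities preserve collinearity, it then suffices to start from points in general position. Concretely, I would identify $\mathbb{R}^2$ with $\mathbb{C}$ and write the $i$-th kinetic point as $p_i(t)=z_i+t\,v_i$ with $z_i,v_i\in\mathbb{C}$. The key choice is to take the velocity of the $i$-th point to be $v_i=i\,z_i$, a quarter turn of its initial position vector, so that
\[
 p_i(t)=z_i+t\,(i z_i)=(1+it)\,z_i .
\]
The factor $1+it$ is independent of $i$, so at time $t$ every point is obtained from its initial position by multiplying by the common complex scalar $1+it$.

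Multiplication by $1+it$ is a rotation composed with a scaling (a spiral similarity), and $1+it\neq 0$ for every real $t$, so this map is a genuine invertible similarity of the plane at all times. Similarities send lines to lines and hence preserve and reflect collinearity: three points $p_j(t),p_k(t),p_l(t)$ are collinear at time $t$ if and only if $z_j,z_k,z_l$ are collinear. Thus, provided the initial points $z_1,\dots,z_n$ are chosen with no three collinear, no three of the moving points are ever collinear on the whole interval $(-\infty,\infty)$. The speeds and directions can then be read off directly: the velocity $v_i=iz_i$ has modulus $|v_i|=|z_i|$ and argument $\arg v_i=\arg z_i+\tfrac{\pi}{2}$, so distinct moduli $|z_i|$ yield pairwise distinct speeds and distinct arguments $\arg z_i$ yield pairwise distinct directions of motion.

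The only remaining point is to check that a single choice of $z_1,\dots,z_n$ can meet all three requirements at once---distinct moduli, distinct arguments, and no three collinear (in particular all $z_i\neq 0$)---and I expect this genericity step to be the one obstacle, a minor one. Distinct moduli and distinct arguments already force the $z_i$ to be distinct and nonzero, while the $\binom{n}{3}$ collinearity conditions each cut out a set of measure zero, so a generic choice respecting the modulus and argument constraints avoids all of them; equivalently, one starts from any points with distinct moduli and arguments and applies an arbitrarily small perturbation to destroy accidental collinearities. I would note that this similarity trick bypasses the direct route, in which one expands the signed-area determinant of a triple into a quadratic $\gamma t^2+\beta t+\alpha$ in $t$ and tries to force a negative discriminant for every triple: here the signed area is simply $|1+it|^2=(1+t^2)$ times its value at $t=0$, so a nonvanishing initial area stays nonvanishing and the discriminant is automatically negative.
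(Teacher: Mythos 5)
Your proposal is correct, and it proves the theorem by a genuinely different argument than the paper, even though the underlying kinematics are closely related. The paper places all points on the unit circle moving tangentially at speed $1$ --- in your complex notation this is exactly $p_i(t)=(1-it)e^{i\theta_i}$, i.e.\ your construction with all $|z_i|=1$ --- and argues that the trajectories form one ruling of a hyperboloid of one sheet whose horizontal cross-sections are circles, so no line ever meets three of the points; since equal moduli force equal speeds, the paper then applies a stretch $(x,y)\mapsto(2x,y)$ to make the speeds distinct, and reruns the argument with ellipses in place of circles. You instead let the moduli differ from the start, which destroys the ``all points on one conic'' picture (your points at time $t$ lie on concentric circles of different radii), and you replace the conic-section argument by the observation that the whole configuration evolves by the single invertible spiral similarity $z\mapsto(1+it)z$, so collinearity at any time pulls back to collinearity at time $0$. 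Your route buys generality and economy: any initial configuration with no three collinear, distinct moduli, and distinct arguments works, the distinct-speed requirement comes for free without a separate stretching step, and no facts about ruled quadrics are needed (your closing remark that the signed area is multiplied by $|1+it|^2=1+t^2$ makes the whole proof a one-line determinant computation). What the paper's version buys is consistency with the ruled-surface machinery used elsewhere in the paper (Lemma~\ref{lem:skew} and the hyperboloid constructions), and a concrete, picture-friendly configuration. Your genericity step is sound as stated: the constraints of distinct moduli and distinct arguments define a nonempty open set of configurations, and the $\binom{n}{3}$ collinearity conditions are proper algebraic (hence measure-zero) subsets, so a generic or slightly perturbed choice satisfies everything simultaneously.
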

\begin{proof}
We will start by constructing a set of kinetic points with no 3-collinearities, having different directions but the same speed. Then, we will modify the construction so that the points move with different speeds.

For $1 \leq i \leq n$, let $\theta_i = \pi/2 + \pi/2i$.
At time $0$, place point $p_i$ at a distance of $1$ from the origin at an angle of $\theta_i$ from the positive $x$ direction.
Each point moves with speed $1$ in the direction $\theta_i - \pi/2$ (see Fig. \ref{fig:noLinesSame}).

\begin{figure}[h]
\center{
\includegraphics[height=2in,width=2.5in]{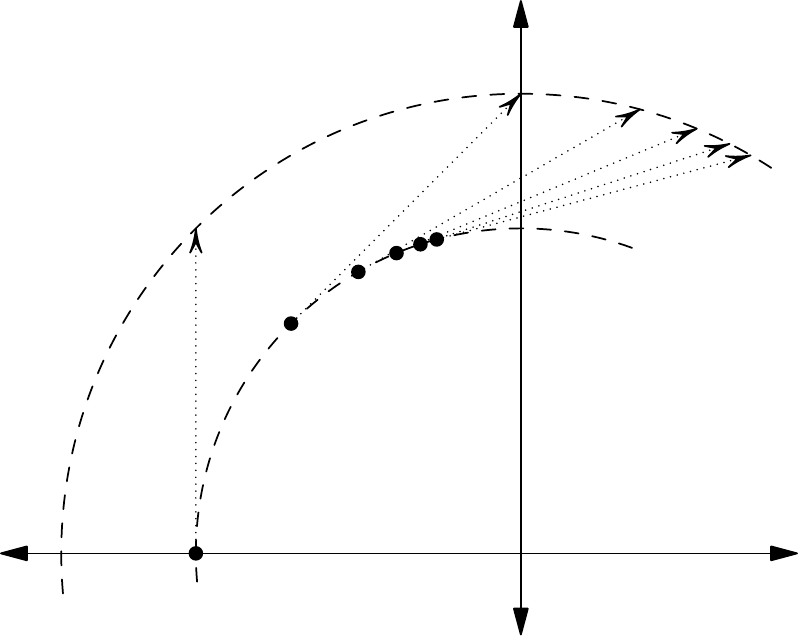}}
\caption{\label{fig:noLinesSame} A set of points, each moving at speed $1$, of which no three are ever collinear.}
\end{figure}

By this construction, the lines $L(p_i)$ will be from one ruling of a hyperboloid of a single sheet $S$ \cite{hilbert1952geometry}.
The intersection of any horizontal plane with $S$ will be a circle.
Since no line intersects a circle in more than two points, there will never be three points on any line.

In order to modify this construction so that no two points have the same speed, we will stretch it in the $x$-direction.

For $1 \leq i \leq n$, if $p_i$ is at location $(x_i, y_i)$ at time $0$, then place point $p'_i$ at location $(2x_i, y_i)$.
If the velocity vector of $p_i$ is $(v_{(x,i)}, v_{(y,i)})$, then the velocity vector of $p'_i$ is $(2v_{(x,i)}, v_{(y,i)})$ (see Fig. \ref{fig:noLinesDifferent}).

\begin{figure}[h]
\center{
\includegraphics[height=1.2in,width=2.8in]{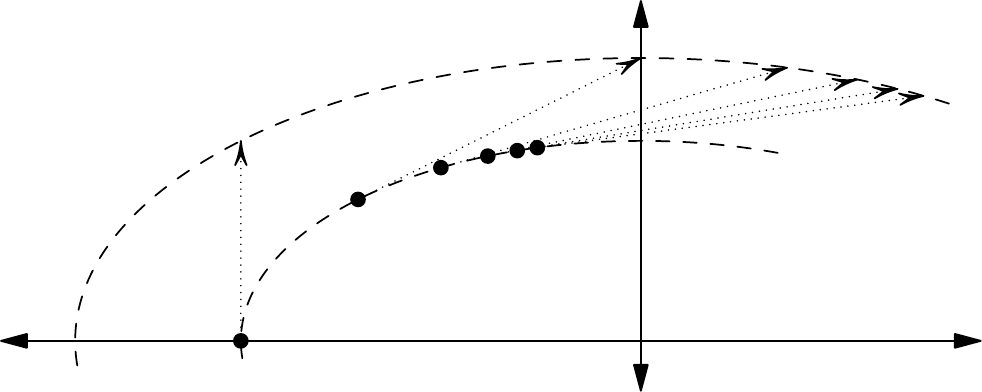}}
\caption{\label{fig:noLinesDifferent} A set of points, no two moving at the same speed, of which no three are ever collinear.}
\end{figure}

Since no two points $p_i, p_j \in \{p_1, p_2, ...,p_n\}$ have the same $x$ component to the vector describing their motion, no two points $p'_i, p'_j \in \{p_1, p_2, ...,p_n\}$ have the same speed.

The lines $L_{p'_i}$ are from one ruling of a hyperboloid of a single sheet $S'$.
The main difference between $S$ and $S'$ is that $S'$ is stretched in the $x$-direction, so the intersection of any horizontal plane with $S'$ is an ellipse rather than a circle.
No line intersects an ellipse in more than two points, so again there will never be three points on any line.
\end{proof}

\section{Conclusion}\label{sec:conclude}

We derived tight bounds on the minimum and maximum number of 3- and 4-collinearities among $n$ kinetic points, each moving with constant velocity in the plane. Our initial study poses more questions than it answers.

\begin{open}
What is the maximum number of $k$-collinearities among $n$ kinetic points in the plane?
Is our lower bound $\Omega(n^3/k^4 + n^2/k^2)$ tight?
\end{open}

\begin{open}
What is the maximum number of $k$-collinearities among $n$ kinetic points in the plane if no three points are always collinear and no two points collide?
\end{open}

\begin{open}
What is the maximum number of $3$-collinearities among $n$ kinetic points in the plane if the trajectory of each point is an algebraic curve of degree bounded by a constant $b$?
\end{open}

\begin{open}
A $d$-collinearity in $\mathbb{R}^d$ is called {\em full-dimensional} if not all points involved in the collinearity are in a hyperplane at all times. What is the maximum number of full-dimensional $d$-collinearities among $n$ kinetic points in $\mathbb{R}^d$?
\end{open}

The trajectories of $n$ kinetic points in $\mathbb{R}^d$ is an arrangement of $n$ nonhorizontal lines in $\mathbb{R}^{d+1}$. Recall that a $k$-collinearity corresponds to a {\em horizontal} line that intersects $k$ trajectories. If we drop the restriction to horizontal lines, we are led to the following problem.

\begin{open}
For an arrangement ${\mathcal A}$ of $n$ lines in $\mathbb{R}^3$, what is the maximum number of lines $L$ such that $L$ intersects at least $3$ lines in ${\mathcal A}$, which are not all concurrent and not all from a single ruling of a doubly ruled surface?
\end{open}

\small
\bibliographystyle{abbrv}
\bibliography{col_kinetic}

\end{document}